\newtheorem{remark}{Remark}[section]
\newcommand{\lm}{\bm{\theta}}
\newcommand{\bx}{\bm{x}}
\newcommand{\E}{\mathbb{E}}
\newcommand{\prob}{\mathbb{P}}
\newcommand{\loss}{\mathcal{L}}
\newcommand{\NashStableSet}{\mathscr{N}_{\text{stable}}}
\title[AAMAS-2021 Formatting Instructions]{Incentive Mechanism Design for Federated Learning: Hedonic Game Approach}
\author{Cengis Hasan}
\affiliation{
  \institution{University of Luxembourg}
  \department{SnT - Interdisciplinary Centre for Security, Reliability and Trust}
  }
\email{cengis.hasan@uni.lu}
\begin{abstract}
Incentive mechanism design is crucial for enabling federated learning. We deal with clustering problem of agents contributing to federated learning setting. Assuming agents behave selfishly, we model their interaction as a stable coalition partition problem using hedonic games where agents and clusters are the players and coalitions, respectively. We address the following question: is there a family of hedonic games ensuring a Nash-stable coalition partition? We propose the Nash-stable set which determines the family of hedonic games possessing at least one Nash-stable partition, and analyze the conditions of non-emptiness of the Nash-stable set. Besides, we deal with the decentralized clustering. We formulate the problem as a non-cooperative game and prove the existence of a potential game.
\end{abstract}
\keywords{Federated Learning, Hedonic Games, Optimal Clustering}
\newcommand{\BibTeX}{\rm B\kern-.05em{\sc i\kern-.025em b}\kern-.08em\TeX}
\begin{document}


\pagestyle{fancy}
\fancyhead{}


\maketitle 


\section{Introduction}
Data protection is a major concern. If we do not trust someone withholding our data, we may opt for federated learning by privately developing intelligent systems to create privacy-preserving AI. \textit{Federated learning} enables privacy-preserving machine learning in a decentralized way \cite{bib:Li}. It is used in situations where data is distributed among different agents and training is impossible due to the difficulty to collect data centrally. All data is kept on device while a shared (global) learning model is trained in each device and aggregated (combined) centrally. Formally, we consider the following setting: i) data owner \textit{agents} which locally trains the shared learning model, and ii) \textit{model aggregating entity} (MAE) which combines learning model of its own with the agents. MAE and agents contribute to the same shared learning model. Federated learning has been identified as a distributed machine learning framework which sees rapid advances and broad adoption in next generation networking and edge systems \cite{bib:MehdiBennis,bib:Li,bib:OpenProblems,bib:CongShen,bib:WalidSaad1,bib:WalidSaad2,bib:MungChiang,bib:YuanmingShi}. Obviously, the motivation to implement federated learning is to reduce the variance in a learned model by accessing more data.

A very crucial question is how would MAE motivate the agents to participate in federated learning. Designing the mechanism of agents' incentives can be performed by utilizing various frameworks such game theory, auction theory, etc \cite{bib:WalidSaad1}. Any \textit{clustering} among agents (players) being able to make strategic decisions becomes a \textit{coalition formation game} when the players --for various individual reasons-- may wish to belong to a relative \textit{small coalition} rather than the \textit{grand coalition}--the set of all players. Players' moves from one to another coalition are governed by a set of rules. Basically, an agent (player) will move to a new coalition when it may obtain a better gain from this coalition. We shall not consider any permission requirements, which means that a player is always accepted by a coalition to which the player is willing to join. Based on those rules, the crucial question in the game context is \textit{how a stable partition exists}. \textit{This is essential to enable federated learning}. 

We study the \textit{hedonic} coalition formation game model of the agents and analyze the \textit{Nash stability} \cite{bib:hajdukova}. A coalition formation game is called \textit{hedonic} if each player's preferences over partitions of players depend only on the members of his/her coalition. Finding a stable coalition partition is the main question in a coalition formation game. We refer to \cite{bib:Existenceofstabilityinhedoniccoalitionformationgames} discussing the stability concepts associated to \textit{hedonic conditions}. In the sequel, we concentrate on the Nash stability. The definition of the Nash stability is quite simple: \textit{a partition of players is Nash stable whenever no player deviates from its coalition to another coalition in the partition}.

In this work, we deal with the following problem: having coalitions associated with their gain, we seek the answer of how must the coalition gain be allocated to the players in order to obtain a stable coalition partition. Clearly, the fundamental question is to determine which gain allocation methods may ensure a Nash-stable partition. Note that the answer of this enables to find the family of hedonic games that possess at least one Nash-stable partition of players. We first propose the definition of \textit{the Nash-stable set} which is the set of all possible allocation methods resulting in Nash-stable partitions. We show that additively separable and symmetric gain allocation always ensures Nash-stable partitions. Moreover, our work aims also at finding the partitions in a decentralized setting which basically corresponds to finding stable decentralized clustering. We model this problem as a non-cooperative game and show that such a game is a potential game. 

A recent work that considers the clustering of agents in the form of hedonic games can be found in \cite{bib:Donahue} where the authors study the agents decisions to participate in federated learning setting in case of a biased global model.
In \cite{bib:HierarchicalIncentiveMechanismDesign} a federated learning based privacy-preserving approach is proposed to facilitate collaborative machine learning among multiple model owners in mobile crowdsensing. Another work in \cite{bib:IncentiveDesignandDifferentialPrivacyBasedFederatedLearning} implements mechanism design and differential privacy where an objectives-first approach is considered for designing incentives toward desired objectives; the differential privacy can provide a theoretical guarantee for users' privacy in federated learning participation. In \cite{bib:IncentiveMechanismforFederatedLearninginWirelessCellularnetwork}, an incentive mechanism between a base station and mobile users as an auction game is formulated where the base station is an auctioneer and the mobile users are the sellers. In \cite{bib:IncentiveMechanismDesignforFederatedLearningwithMulti-DimensionalPrivateInformation}, the authors consider a multidimensional contract-theoretic approach on optimal incentive mechanism design, in the presence of users' multi-dimensional private information including training cost and communication delay. The work in \cite{bib:AnIncentiveMechanismDesignforEfficientEdgeLearningbyDeepReinforcementLearningApproach} deals with a deep reinforcement learning based approach to design the incentive mechanism and find the optimal trade-off between model training time and parameter server's payment. The authors in \cite{bib:MotivatingWorkersinFederatedLearning:AStackelbergGamePerspective} analyze the influence of heterogeneous clients on federated learning convergence, and propose an incentive mechanism to balance the time delay of each iteration. For a recent survey on mechanism design for federated learning, we refer to the paper in \cite{bib:ASurveyofIncentiveMechanismDesignforFederatedLearning}.

\section{Motivation and Problem Description}
\begin{figure*}
    \centering
    \includegraphics[width=\textwidth]{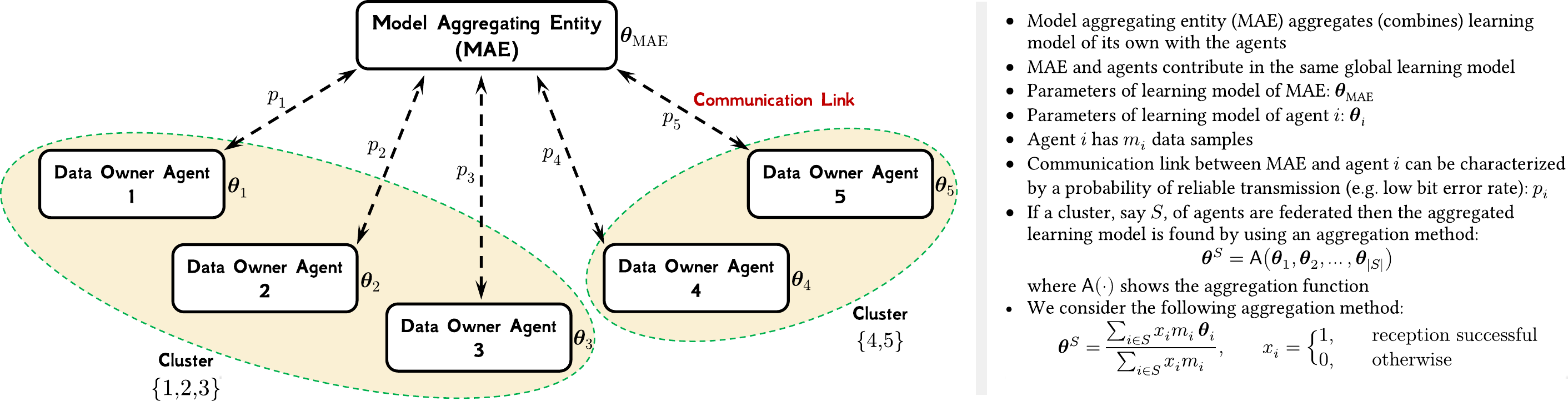}
    \caption{Federated learning framework.}
    \label{fig:model}
\end{figure*}
We consider a set of \textit{agents} denoted $N = \{1,2,\ldots,n\}$ that can participate in the federated learning setting, and a \textit{model aggregating entity} (MAE) which aggregates (combines) \textit{learning model} of its own with the agents. MAE and agents contribute in the same global learning model. The parameters of learning model of MAE and agent $i$ are represented by $\lm_{\text{MAE}} = (\theta_{\text{MAE},1}, \theta_{\text{MAE},2}\ldots,\theta_{\text{MAE},M})$ and $\lm_i = (\theta_{i,1}, \theta_{i,2}, \ldots, \theta_{i,M})$, respectively where $M$ is the number of trainable weights (variables) of the learning model, and every agent $i$ has $m_i$ data samples. We also consider that the communication link between MAE and agent $i$ can be characterized by a probability of reliable transmission (e.g. low bit error rate) denoted  $p_i$. The vector of probabilities of reliable transmission of all agents is shown by $\bm{p} = (p_1,p_2,\ldots,p_n)$. If a \textit{cluster}, say $S\subseteq N$, of agents \textit{agree} to be federated, then the aggregated (combined) learning model is found by using  an aggregation method which is given by
\begin{equation}
 (\lm^S;\bx) = \mathsf{A}(\lm_1,\lm_2,\ldots,\lm_{n_S};\bx), \quad (n_S = |S|)
\end{equation}
where $\mathsf{A}(\cdot)$ shows the aggregation function given $\lm_1,\lm_2,\ldots,\lm_{n_S}$ and $\bx = (x_1,x_2,\ldots,x_n)$ in which $x_i=1$ if MAE receives successfully information of $\lm_i$. This is nothing more than choosing agent $i$ with probability $p_i$. 


\subsection{Incentives of Agents}
We assume that agents may not agree to be in the same cluster depending on their preferences. Thus, we come up with the case where multiple disjoint clusters may occur. In Figure \ref{fig:model}, we illustrate such an example scenario in which two disjoint clusters, i.e. $\{1,2,3\}$ and $\{4,5\}$, create two different aggregated learning models, i.e. $(\lm^{\{1,2,3\}};\bx)$ and $(\lm^{\{4,5\}};\bx)$, respectively.

From the perspective of agents, we assume that MAE assigns a \textit{gain} to all possible clusters. Note that from the MAE point of view, this is the \textit{cost} that must be paid to the cluster. By this way, MAE evaluates the contribution to the aggregated model. However, we shall formulate the problem using the ``gain'' term due to the fact that from the agents point of view, this corresponds to the earnings of the agents. Then, we come up with the question how to design the incentives in order that the agents are willing to participate in federated learning setting taking into account their preferences. 

In this work, we consider the following linear aggregation method:
\begin{equation}\label{eq:aggregatedlossbound}
    (\lm^S;\bx) =\frac{\sum_{i\in S}x_i m_i\lm_i}{\sum_{i\in S}x_i m_i}, \quad  
    x_i = 
    \begin{cases}
        1, & \mbox{reception successful},\\
        0, & \mbox{otherwise}.
    \end{cases}
\end{equation}
which essentially corresponds to the weighted average of the learning models within the cluster. Furthermore, we represent by $\loss(\lm^S;\bx)$ the \textit{loss} of learning model given by parameters $\lm^S$. The expected value of loss function is given by
\begin{gather}
    \E_{\bx}[\loss(\lm^S)] = \sum_{\bx\in\mathcal{X}} \loss(\lm^S;\bx) \prob[{\bx}] \\
    \prob[{\bx}] = \prod_{i\in N} p_i^{x_i}(1-p_i)^{1-x_i}
\end{gather}
where $\mathcal{X}$ with $|\mathcal{X}| = 2^n$ is the set of all possible combinations of $\bx$ vectors. On the other hand, given $\bx$ and cluster $S$, the loss of aggregated model due to $S$ is lower than the loss averaged over the agents in $S$:
\begin{equation}
    \loss(\lm^S;\bx) \leq \frac{\sum_{i\in S}x_i\loss(\lm_i)}{\sum_{i\in S}x_i}
\end{equation}
due to the fact that when the disjoint agents are merged, the amount of data used to train the model increases which results in lower training error. If two disjoint clusters $S$ and $T$, i.e. $S\cap T = \emptyset$, are federated, then we denote the new parameters as $\lm^{S\cup T}$. It is reasonable to assume that the minimum of loss of $\lm^{S\cup T}$ is lower than the minimum of average loss of $\lm^S$ and $\lm^T$:
\begin{equation}
    \loss(\lm^{S\cup T};\bx) \leq \tfrac{\sum_{i\in S}x_i}{\sum_{i\in S\cup T}x_i} \loss(\lm^S;\bx) + \tfrac{\sum_{i\in T}x_i}{\sum_{i\in S\cup T}x_i} \loss(\lm^T;\bx),
\end{equation}

Moreover, we consider that there exists a communication cost, denoted $c$, when MAE receives the learning model's parameters' data; note that this data increases with the size of cluster. On the other hand, MAE earns a monetary gain by utilizing the aggregated model and commits a monetary value which can be paid to the agents. Then, the agents deduct the communication cost $c$ from what they earn from MAE. We represent by $ u $ the gain which assigns a real value for every subset of $ N $, i.e. $ u : 2^N \rightarrow \mathbb{R} $ where $ 2^N $ is the collection of all possible non-empty subsets of $ N $ and empty set $ \emptyset $, and we set $ u(\emptyset) = 0 $. Thus, the gain of cluster $S\in 2^N$ is given by
\begin{equation}\label{eq:gain_of_cluster}
    u(S) = f\left( \tfrac{1}{\E_{\bx}[\loss(\lm^S)]} \right) - c(S).
\end{equation}
where $f:\mathbb{R} \rightarrow \mathbb{R}$ can be a monotonically increasing function and inversely proportional to $\E_{\bx}[\loss(\lm^S)]$ meaning that the less loss the more gain. Note that this is the monetary value that cluster $S$ earns.
Any agent $i$ can join a cluster if guaranteed to be paid at least $u(i) = \pi_i$ which is the \textit{minimal price} given by 
\begin{equation}
    \pi_i = f \left( \tfrac{p_i}{\loss(\lm_i)} \right).
\end{equation}
Note that $p_i$ is the second parameter which has an impact on the price asked by the agent. It corresponds to the fact that as $p_i$ has a poor value, the agent asks lower price to participate in the federation. 

\subsection{Optimal Clustering}
Consider that MAE aims at finding the clustering that results in minimal cost. Optimal clustering problem is defined through the agents $ N $ and 
a \emph{clustering} set $\Pi$ which partitions the agents' set $N$ such that $\bigcup_{S\in \Pi} S = N$. All clusters in $\Pi$ are disjoint clusters, i.e., $S\cap T = \emptyset$ for all $S,T \in \Pi$. Given $\mathcal{P}$, the set of all possible clustering structures, the optimal clustering problem is to find a clustering $ \Pi\in \mathcal{P} $ which minimizes the objective while satisfying the constraints of agents:
\begin{align}\label{eq:optimalclustering}
	&\min_{\Pi\in \mathcal{P}} \sum_{S\in \Pi} u(S) \mbox{ subject to} \notag \\
	&\sum_{i\in S}\pi_i \leq u(S), \quad \forall S \in \Pi,
\end{align}
where the constraints in eq. \eqref{eq:optimalclustering} ensure that the demand of agents are satisfied. The optimization only tells that the agents are guaranteed to be paid their minimal price but not how much more if the gain of cluster allows it. 

\subsection{Clustering under Selfishness}
As agents can behave selfishly, the fundamental question is to find clusters which are stable under selfishness. Let us consider that agent $i$ shall get some monetary gain by joining cluster $S$ as following:
\begin{equation}
    \mbox{gain of agent } i = \pi_i + \phi_i^S
\end{equation}
where $\phi_i^S\in\mathbb{R}$ is the \textit{clustering gain} of agent $i$ by joining cluster $S$, and we set $\phi_i^i = 0$ for all $i\in N$. Such a setting enables to deal with the clustering gains. Thus, the fundamental problem becomes to
\begin{quote}
    \textit{find clustering gains $\phi$ so that the agents agree not to change their cluster}
\end{quote}
Obviously, this is the stable clustering problem under selfishness where the agents strategically decide to which cluster to join; thus, we can define the problem as a \textit{coalition formation game}. We then change the language of problem formulation using game theoretic terms, i.e.
\begin{center}
    \textbf{agents} $\rightarrow$ \textbf{players} \\
    \textbf{cluster} $\rightarrow$ \textbf{coalition} \\  \textbf{clustering} $\rightarrow$ \textbf{partition}
\end{center}
In the sequel, we deal with figuring out family of coalition formation games that ensure stable clusterings.

\section{Hedonic Game}
A hedonic coalition formation game (in short, hedonic game) is given by a pair $\langle N,\succ\rangle$, where $\succ := (\succeq_1,\succeq_2, \ldots ,\succeq_n)$ denotes the \textit{preference profile}, specifying for each \textit{player} $i\in N$ his \textit{preference relation} $\succeq_i$, i.e. a reflexive, complete and transitive binary relation. 

Given $\Pi$, called as \textit{coalition partition}, and $i$, $S_{\Pi}(i)$ denotes the set $S \in \Pi$ such that $i\in S$. Moreover, $ \mathcal{P} $ is the set of all possible coalition partitions over $ N $. 
In its partition form, a coalition formation game is defined on the set $N$ by associating a gain $u(S|\Pi)$ to each subset of any partition $\Pi$ of $N$. The gain of a set is independent of the other coalitions, and therefore, $u(S|\Pi) = u(S)$. The games of this form are more restrictive but present interesting properties to reach a stability. Practically speaking, this assumption means that the gain of a group is independent of the other players outside the group. Hedonic games fall into this category with an additional assumption:
\begin{definition}
	\label{def:hedonic}
	A coalition formation game is  \textit{hedonic} if 
	\begin{itemize}
		\item \emph{the gain of any player depends solely on the members of the coalition to which the player belongs}, and
		\item \emph{the coalitions arise as a result of the preferences of 
			the players over their possible coalitions' set}.
	\end{itemize}
\end{definition}

\subsection{Preference Relation}
The preference relation of a player can be defined over a \emph{preference function}. We consider the case where the preference relation is chosen to be the gain allocated to the player in a coalition. Thus, player $i$ prefers the coalition $S$ to $T$ iff,
\begin{equation}
	\phi_i^S \geq \phi_i^T \Leftrightarrow S \succeq_i T.
\end{equation}

\subsection{The Nash Stability}
The stability concepts for a hedonic game are various. In the literature, a hedonic game could be \textit{individually stable, Nash stable, core stable, strict core stable, Pareto optimal, strong Nash stable, or, strict strong Nash stable}. We refer to 
\cite{bib:Existenceofstabilityinhedoniccoalitionformationgames} for a thorough definition of these different stability concepts. In this paper, we are only interested in the \textbf{Nash stability} because the players do not cooperate to take their decisions jointly. 

\begin{definition}[Nash Stability]
	\label{def:nashstability}
	A partition of players is Nash-stable whenever no player has incentive to unilaterally change its coalition to another coalition in the partition which can be mathematically formulated as follows: partition $\Pi^{\text{NS}}$ is said to be Nash-stable if no player can benefit from moving from his coalition $S_{\Pi^{\text{NS}}}(i)$ to another existing coalition $T\in \Pi^{\text{NS}}$, i.e.:
	\begin{equation}
		S_{\Pi^{\text{NS}}}(i) \succeq_i T \cup i, \quad   \forall T \in \Pi^{\text{NS}} \cup \emptyset; \forall i\in N.
	\end{equation}
	which can be similarly defined over preference function as follows:
	\begin{equation}
		\phi_i^{S_{\Pi^{\text{NS}}}(i)} \geq \phi_i^{T \cup i}, \quad   \forall T \in \Pi^{\text{NS}} \cup \emptyset; \forall i\in N.
	\end{equation}	
\end{definition}
Nash-stable partitions are immune to individual movements even when a player who wants to change does not need permission to join or leave an existing coalition \cite{bogomonlaia}. 
\begin{remark}
Stability concepts being immune to individual deviation are \emph{Nash stability, individual stability, contractual individual stability}. Nash stability is the strongest within above. The notion of \emph{core stability} has been used already in some models where immunity to coalition deviation is required \cite{bib:hajdukova}. 
\end{remark}

\begin{remark}
	In \cite{bib:Barber}, the authors propose some set of axioms which are \textit{non-emptiness, symmetry pareto optimality, self-consistency}; and they analyze the existence of any stability concept that can satisfy these axioms. It is proven that for any game $ |N|>2 $, there does not exist any solution which satisfies these axioms.
\end{remark}
\subsection{Aggregated Learning Model Parameters}
When a stable partition exists, then this means that all the players (agents) are agreed to participate to federation. As a result of this, MAE utilizes the following aggregation of learning model parameters:
\begin{equation}
     \lm^F = w\lm_{\text{MAE}} + (1-w) \frac{\sum_{i\in N}x_i m_i\lm_i}{\sum_{i\in N}x_i m_i}
\end{equation}
where $0 \leq w \leq 1$ is a weighting parameter showing how much MAE favors the aggregated learning model parameters of agents (players), $\lm_{\text{MAE}}$ shows the learning parameters of MAE's local model. In summary, we have the following procedure:  
\begin{oframed}
\noindent\textbf{while} Nash-stable partition $\Pi^{\text{NS}}$ exists
\begin{enumerate}
    \item[1.] Player (agent) $i$ sends information of $\lm_i$, for all $i \in N$
    \item[2.] MAE calculates aggregated learning model parameters $\lm^F$
\end{enumerate}
\textbf{end} 
\end{oframed}
Given $\lm^F$, the expected value of loss function in federation can be calculated as following:
\begin{align}
    \E_{\bx}[\loss(\lm^F)] &= \sum_{\bx\in\mathcal{X}}\loss(\lm^F; \bx) \prob[\bx] \notag \\
    &\geq \loss(\E_{\bx}[\lm^F;\bx]) \qquad (\text{Jensen's inequality})
\end{align}
where 
\begin{equation}
    \E_{\bx}[\lm^F;\bx] = w\lm_{\text{MAE}} + (1-w) \sum_{\bx\in \mathcal{X}} \frac{\sum_{i\in N}x_i m_i\lm_i}{\sum_{i\in N}x_i m_i} \prob[\bx].
\end{equation}
Note that calculating $\E_{\bx}[\loss(\lm^F)]$ may be more difficult than $\E_{\bx}[\lm^F;\bx]$. Therefore, it can be also an option to define the gain of a cluster using $\E_{\bx}[\lm^F;\bx]$ in eq. \eqref{eq:gain_of_cluster}.

\section{The Nash-stable Set}
As the gain $u$ associated with all possible coalitions are known, we are interested in finding a gain distribution to ensure Nash stability. We thus define an \textit{allocation method} $\bm{\phi}\in \mathbb{R}^{\kappa}$ where $\kappa = n2^{n-1}$ as following:
\begin{equation}\label{eq:utilityallocationmethod}
	\bm{\phi}=\{\phi_i^S : \forall i\in S, \forall S \in 2^N\}
\end{equation}
which directly sets up a preference profile. The set of all possible  allocation methods is denoted by $\mathcal{F}\subset \mathbb{R}^{\kappa}$. We define the \textit{mapping} $\mathsf{M}$, which for any allocation method $\bm{\phi}$, it finds corresponding all possible Nash-stable partitions, i.e. $\mathsf{M}(\bm{\phi}) \subset \mathcal{P}$. 

We define the Nash-stable set which includes all those allocation methods  that build the following set:
\begin{multline}\label{eq:theNashStabelSet}
	\NashStableSet = \left\{ \bm{\phi}\in \mathbb{R}^{\kappa} : \exists \Pi \in \mathsf{M}(\bm{\phi}) | S_{\Pi}(i) \succeq_i T\cup i, \right.  \\ \left. \forall T\in \Pi\cup \emptyset; \forall i\in N \right\}.
\end{multline}
Essentially, the Nash-stable set includes 
\begin{quote}
\textit{the family of hedonic games, each one having a different preference profile that derives from a different allocation method. Thus, before finding a Nash-stable partition, we need to find the hedonic game (i.e., an allocation method) for which a Nash-stable partition exists.}
\end{quote}

Let us define the set of constraints stemming from the preference function in order to check if the Nash-stable set is non-empty. Due to the gain bound, for any allocation method $\bm{\phi}$, we have 
\[
\sum_{i\in S}(\pi_i+\phi_i^S) \leq u(S), \quad \forall S\in 2^N
\]
called as \textit{budged balanced} gain allocation which further can be given by
\begin{equation}\label{eq:budgetbalancedness}
    \sum_{i\in S} f\left( \tfrac{p_i}{\loss(\lm_i)} \right) + \sum_{i\in S} \phi_i^S \leq f\left( \tfrac{1}{\E_{\bx}[\loss(\lm^S)]}\right) - c(S), \quad \forall S\in 2^N.
\end{equation}
For simplicity, let us define \textit{marginal gain} as following:
\begin{equation}
\Delta_{\lm}(S) = 
    \begin{cases}
    f\left( \tfrac{1}{\E_{\bx}[\loss(\lm^S)]} \right) - c(S) - \sum_{i\in S}f\left( \tfrac{p_i}{\loss(\lm_i)} \right),  &\forall S\in 2^N\setminus i,\\
    0, & \forall i\in N.
    \end{cases}
\end{equation}
which results in the following constraints:
\begin{equation}\label{eq:BudgetBalancednessConstraints}
	\mathscr{C}_{\lm}^1(\bm{\phi}):=\left\{\sum_{i\in S}\phi_i^S \leq \Delta_{\lm}(S), \forall S \in 2^N \right\}.
\end{equation} 
which are the constraints that stem from budged balancedness. On the other hand, for any $\bm{\phi}$, the constraints that ensure the Nash stability are given by
\begin{equation}\label{eq:NashStabilityConstraints}
	\mathscr{C}_{\lm}^2(\bm{\phi}) := \left\{\exists \Pi \in \mathsf{M}(\bm{\phi})  \left|  \phi_i^{S_{\Pi}(i)}\geq \phi_i^{T\cup i}, \forall T\in \Pi\cup \emptyset; \forall i \in N\right.\right\},
\end{equation}
Based on these two constraints represented by $\mathscr{C}_{\lm}^1(\bm{\phi})$ and $\mathscr{C}_{\lm}^2(\bm{\phi})$, we can define the Nash-stable set. 
\begin{equation}\label{def:NashStableCore}
	\NashStableSet(\lm) = \left\{ \bm{\phi}\in \mathbb{R}^{\kappa} :   \mathscr{C}_{\lm}^1(\bm{\phi}) \mbox{ and } \mathscr{C}_{\lm}^2(\bm{\phi})  \right\},
\end{equation}
Then, the non-emptiness of the Nash-stable set is crucial. The theorem below states the necessary conditions about the non-emptiness of the Nash-stable set:
\begin{oframed}
\begin{theorem}\label{thm:NonEmptinessOfNashstableSet}
	The Nash-stable set can be non-empty.
\end{theorem}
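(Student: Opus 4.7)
The plan is to prove non-emptiness by exhibiting a rich subclass of allocation methods that automatically satisfy $\mathscr{C}_{\lm}^2(\bm{\phi})$ and then rescaling so that $\mathscr{C}_{\lm}^1(\bm{\phi})$ holds as well. Concretely, I restrict attention to \emph{symmetric additively separable} allocations: pick symmetric weights $v_{ij}=v_{ji}\in\mathbb{R}$ for every pair $\{i,j\}\subseteq N$, and define
\begin{equation}
\phi_i^{S}\;=\;\sum_{j\in S\setminus\{i\}} v_{ij},\qquad \forall\, i\in S,\;\forall\, S\in 2^{N}.
\end{equation}
Any such $\bm{\phi}$ lies in $\mathbb{R}^{\kappa}$ and induces a legitimate preference profile via the rule $S\succeq_i T \Leftrightarrow \phi_i^{S}\geq \phi_i^{T}$.

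Next I would verify $\mathscr{C}_{\lm}^2(\bm{\phi})$ via a potential-function argument in the spirit of Bogomolnaia--Jackson. Define the social welfare
\begin{equation}
W(\Pi)\;=\;\sum_{S\in\Pi}\;\sum_{\{i,j\}\subseteq S} v_{ij}.
\end{equation}
If player $i$ strictly prefers moving from $S_\Pi(i)$ to $T\cup\{i\}$, i.e.\ $\phi_i^{T\cup\{i\}}>\phi_i^{S_\Pi(i)}$, then by symmetry the change in $W$ caused by this unilateral move is exactly $\phi_i^{T\cup\{i\}}-\phi_i^{S_\Pi(i)}>0$. Since $\mathcal{P}$ is finite, every sequence of improving unilateral deviations must terminate, and the terminal partition $\Pi^{\text{NS}}$ satisfies Definition \ref{def:nashstability}. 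Hence $\mathsf{M}(\bm{\phi})\neq\emptyset$, giving $\mathscr{C}_{\lm}^{2}(\bm{\phi})$ for \emph{every} choice of the symmetric weights $v_{ij}$.

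It remains to enforce $\mathscr{C}_{\lm}^1(\bm{\phi})$. By symmetry, $\sum_{i\in S}\phi_i^S=2\sum_{\{i,j\}\subseteq S}v_{ij}$, so the budget constraints reduce to
\begin{equation}
2\sum_{\{i,j\}\subseteq S} v_{ij}\;\leq\;\Delta_{\lm}(S),\qquad \forall\, S\in 2^{N}.
\end{equation}
Since the statement of the theorem only asserts that the set \emph{can} be non-empty, it suffices to exhibit one parameter instance and one weight vector making this system feasible. I would fix any profile $(\lm,\bm{p},c)$ in which the aggregation bounds of Section 2 yield $\Delta_{\lm}(S)\geq 0$ for all $S$ (e.g.\ small communication cost $c$ together with the monotone gain $f$ inversely proportional to $\E_{\bx}[\loss(\lm^S)]$), and then simply take $v_{ij}=0$ for all pairs, or more generally any sufficiently small symmetric weights. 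This trivially satisfies the display above while keeping $\bm{\phi}$ in the symmetric additively separable class, hence $\bm{\phi}\in\NashStableSet(\lm)$.

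The main obstacle is the interaction between the two constraint families: $\mathscr{C}_{\lm}^{2}$ is qualitative (depending only on orderings) while $\mathscr{C}_{\lm}^{1}$ is quantitative. The symmetric additive separability structure decouples them almost perfectly---existence of a Nash-stable partition is scale-invariant in the $v_{ij}$, whereas budget balancedness is a magnitude condition that can be absorbed by rescaling, provided at least one profile yields non-negative marginal gains. The only genuinely delicate point is ruling out the degenerate case where $\Delta_{\lm}(S)<0$ for some $S$, which I would handle by invoking the loss-reduction inequality from Section 2 to justify that such well-behaved parameter profiles exist.
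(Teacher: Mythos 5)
Your proof is correct, and it is genuinely more substantive than the argument the paper gives for this theorem. The paper's own proof is essentially a reformulation: it writes down an optimization problem whose constraints are exactly $\mathscr{C}_{\lm}^1$ and $\mathscr{C}_{\lm}^2$ and observes that feasibility of that problem would imply non-emptiness of $\NashStableSet(\lm)$ --- it never exhibits a feasible point, and the authors immediately concede that the exhaustive search is NP-hard. You instead construct an explicit element: you restrict to symmetric additively separable allocations $\phi_i^S=\sum_{j\in S\setminus\{i\}}v_{ij}$, establish $\mathscr{C}_{\lm}^2$ unconditionally via the Bogomolnaia--Jackson potential $W(\Pi)=\sum_{S\in\Pi}\sum_{\{i,j\}\subseteq S}v_{ij}$, and then satisfy the budget constraints by choosing the magnitudes of the $v_{ij}$. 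This is precisely the route the paper itself takes two subsections later (the additively separable and symmetric gain section, its Theorem 2 cited from Hajdukov\'{a}, and the potential-game Theorem 3), so your proposal effectively front-loads the paper's later machinery into the proof of Theorem 1, where it arguably belongs; what it buys is an actual existence proof rather than a feasibility-checking procedure. One small improvement: your handling of the case $\Delta_{\lm}(S)<0$ is more cautious than necessary. Since $\mathcal{V}(S)=\emptyset$ for singletons and the paper sets $\Delta_{\lm}(i)=0$, the only nontrivial constraints are for $|S|\ge 2$, and taking every $v_{ij}$ equal to a sufficiently large negative constant makes $2\sum_{\{i,j\}\subseteq S}v_{ij}\le\Delta_{\lm}(S)$ hold for all such $S$ regardless of the sign of $\Delta_{\lm}(S)$; the potential argument for $\mathscr{C}_{\lm}^2$ is unaffected because it only uses the ordering induced by the weights. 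Hence you need not restrict to benign parameter profiles, and your construction actually shows the stronger statement that $\NashStableSet(\lm)$ is non-empty for every parameter profile, not merely that it can be.
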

\begin{proof}
We can check if the Nash-stable set is non-empty by solving the following optimization problem:
\begin{align*}
    &\max_{\bm{\phi}} \sum_{S\in 2^N} \sum_{i\in S}  \phi_i^S \mbox{ subject to } \mathscr{C}_{\lm}^1(\bm{\phi}) \mbox{ and } \mathscr{C}_{\lm}^2(\bm{\phi}).
\end{align*}
If there exists any feasible solution of this problem, then we conclude that there is at least one allocation method which provides a Nash-stable partition.
\end{proof}
\end{oframed}
However, searching in an exhaustive manner over the whole partitions is NP-hard as the number of partitions grows according to the Bell number. Typically, with only $ 10 $ players, the number of partitions is as large as $115,975$. 

\subsection{Superadditive Gain}
If the gain function $u$ is superadditive, then it is trivial to check that the marginal gain is also superadditive:
$\Delta_{\lm}(S\cup T) \geq \Delta_{\lm}(S) + \Delta_{\lm}(T)$, for all possible $S$ and $T$ such that $S\cap T = \emptyset$. Due to eq. \eqref{eq:budgetbalancedness}, we have
\begin{gather*}
    \sum_{i\in S\cup T}\phi_i^{S\cup T} = \sum_{i\in S}\phi_i^{S\cup T} + \sum_{i\in T}\phi_i^{S\cup T} \leq \Delta_{\lm}(S \cup T) \\
    \sum_{i\in S}\phi_i^{S} + \sum_{i\in T}\phi_i^{T} \leq \Delta_{\lm}(S) + \Delta_{\lm}(T) \\
    \Rightarrow \sum_{i\in S}\phi_i^{S\cup T} + \sum_{i\in T}\phi_i^{S\cup T} \geq \sum_{i\in S}\phi_i^{S} + \sum_{i\in T}\phi_i^{T}.
\end{gather*}
This result means that any player is better off in a larger coalition which ultimately all players have the most gain in the grand coalition. This is obvious from eq. \eqref{eq:NashStabilityConstraints} where for every player $i\in N$, $\phi_i^N \geq \phi_i^S$ for all $S\in 2^N$.

\subsection{Additively Separable and Symmetric Gain}\label{sec:AdditivelySeparableandSymmetric}
Preferences of a player are \emph{additively separable} whenever the preference can be  stated with a function characterizing how a player prefers another player in each coalition. This means that the player's preference for a coalition is based on individual preferences. This can be formalized as follows:
\begin{definition}
	The preferences of a player are said to be \textbf{additively
		separable} if there exists a function $v_i:N \rightarrow \mathbb{R}$ such that
	\begin{equation}\label{eq:additivelyseparable}
		\sum_{j\in S} v_i(j) \geq \sum_{j\in T} v_i(j) \Leftrightarrow S \succeq_i T, \quad \forall S,T\in  2^N.
	\end{equation}
\end{definition}
$v_i(i)$ is normalized and set to $v_i(i)=0$. A profile of additively separable preferences satisfies \emph{symmetry} if $v_i(j) = v_j(i) = v(i,j)$, for all $i,j\in N$. The meaning of $v(i,j)$ is the \textit{mutual gain} of player $i$ and $j$ when they are in the same coalition. Let $\mathcal{V}(S)$ be the all possible bipartite coalitions which can occur in coalition $S$ such that:
\[
\mathcal{V}(S):= \{(i,j)\in S: j>i\}, \quad |S| \geq 2.
\]
We then define ${\bf v}\in \mathbb{R}^{|\mathcal{V}(N)|}$ which shall serve as an allocation method to generate additively separable and symmetric preferences:
\[
    {\bf v} = \left\{ v(i,j) : \forall (i,j)\in \mathcal{V}(N) \right\}
\]
and mapping $\mathsf{M}$ shall find all possible Nash-stable partitions,, i.e. $\mathsf{M}({\bf v}) \subset \mathcal{P}$. The constraints that define the Nash-stable set are then defined over ${\bf v}$:
\[
\mathscr{C}_{\lm}^1(\bm{\phi}) \rightarrow \mathscr{C}_{\lm}^1({\bf v}) \mbox{ and } \mathscr{C}_{\lm}^2(\bm{\phi}) \rightarrow \mathscr{C}_{\lm}^2({\bf v})
\]
Further, note that the gain that player $i$ has in coalition $S$ is given by
\begin{gather}
    \pi_i + \phi_i^S = f\left( \tfrac{1}{\loss(\lm_i)} \right) + \sum_{j\in S} v(i,j) \notag \\
    \Rightarrow \phi_i^S = \sum_{j\in S} v(i,j)
\end{gather}
On the other hand, due to the symmetry property of mutual gain, we have the following:
\[
\sum_{i,j\in S} v(i,j) = 2\sum_{(i,j)\in \mathcal{V}(S)} v(i,j).
\] 
For example, if $S=(1,2,3)$, then $\sum_{i,j\in S} v(i,j) = 2[v(1,2) + v(1,3) + v(2,3)]$. 
\begin{oframed}
\begin{theorem}
    Additively separable and symmetric preferences always admit a Nash-stable partition. Therefore, constraints in $\mathscr{C}^2_{\lm}({\bf v})$ are always satisfied \cite{bib:hajdukova}.
\end{theorem}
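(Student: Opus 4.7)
The plan is to construct an explicit exact potential function on the set of partitions and exploit the finiteness of $\mathcal{P}$. Define
\begin{equation*}
P(\Pi) \;=\; \sum_{S\in \Pi}\; \sum_{(i,j)\in \mathcal{V}(S)} v(i,j),
\end{equation*}
i.e.\ the total of all pairwise mutual gains summed over the coalitions of $\Pi$. Symmetry $v(i,j)=v(j,i)$ is what allows this single scalar to track every player's preference simultaneously; additive separability is what makes the per-player gain decompose into the mutual-gain terms $\phi_i^S=\sum_{j\in S} v(i,j)$ that appear in $P$.

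The key step is to check that $P$ is an exact potential for unilateral deviations in the hedonic sense of Definition~\ref{def:nashstability}. Suppose player $i$ leaves $S=S_{\Pi}(i)$ and joins $T\in \Pi\cup\{\emptyset\}$, producing a new partition $\Pi'$ that agrees with $\Pi$ except that $S$ is replaced by $S\setminus\{i\}$ and $T$ by $T\cup\{i\}$. Only the pairs involving $i$ change their contribution to $P$, so
\begin{equation*}
P(\Pi')-P(\Pi) \;=\; \sum_{j\in T} v(i,j) \;-\; \sum_{j\in S\setminus\{i\}} v(i,j) \;=\; \phi_i^{T\cup i} - \phi_i^{S_{\Pi}(i)},
\end{equation*}
where the last equality uses $v(i,i)=0$. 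Thus $i$ strictly prefers the deviation if and only if $P$ strictly increases.

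Because $\mathcal{P}$ is finite, the sequence of partitions produced by letting players Nash-improve one at a time cannot cycle (each improvement strictly raises $P$) and must terminate. Any terminal partition $\Pi^{\text{NS}}$ admits no profitable unilateral move, which is exactly the Nash stability condition $\phi_i^{S_{\Pi^{\text{NS}}}(i)} \geq \phi_i^{T\cup i}$ for all $T\in \Pi^{\text{NS}}\cup\{\emptyset\}$ and all $i\in N$. Equivalently, one can take $\Pi^{\text{NS}}\in\arg\max_{\Pi\in\mathcal{P}} P(\Pi)$ directly and argue that no deviation can be improving at a maximizer.

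The only delicate point is bookkeeping the case $T=\emptyset$ (player $i$ splits off as a singleton): the receiving coalition contributes no mutual-gain term before or after, and the singleton $\{i\}$ contributes nothing, so the computation of $P(\Pi')-P(\Pi)$ still yields $-\sum_{j\in S\setminus\{i\}} v(i,j) = \phi_i^{\{i\}}-\phi_i^{S_{\Pi}(i)}$ as required. Once this is verified, the argument closes; the constraints $\mathscr{C}_{\lm}^{2}({\bf v})$ from eq.~\eqref{eq:NashStabilityConstraints} are automatically satisfied by any potential-maximizing ${\bf v}$-induced partition, so the conclusion follows without any further hypothesis on the underlying $v(i,j)$ values.
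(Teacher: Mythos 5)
Your proof is correct, and it is essentially the canonical argument for this result. The paper itself states this theorem without proof (deferring to the cited reference), but the exact potential function you construct, $P(\Pi)=\sum_{S\in\Pi}\sum_{(i,j)\in\mathcal{V}(S)}v(i,j)$, and the computation $P(\Pi')-P(\Pi)=\phi_i^{T\cup i}-\phi_i^{S_\Pi(i)}$ are precisely what the paper carries out later in its potential-game theorem for the decentralized setting, merely rewritten there in strategy-tuple notation; your direct treatment of the $T=\emptyset$ case and the finiteness-of-$\mathcal{P}$ termination argument fill in the remaining details correctly.
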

\end{oframed}
Based on this theorem, we only need to satisfy the constraints given by $\mathscr{C}_{\lm}^1({\bf v})$. Thus, we define the Nash-stable set which generates additively separable and symmetric preferences $\NashStableSet^{\text{A}}(\lm)\subset \NashStableSet(\lm)$ as following:
\begin{equation}\label{eq:NashStableSetAdditivelySeparableAndSymmetric}
    \NashStableSet^{\text{A}}(\lm) = \Bigg\{ {\bf v}\in \mathbb{R}^{|\mathcal{V}(N)|} : \underbrace{
	\sum_{(i,j)\in \mathcal{V}(S)} v(i,j) \leq \tfrac{\Delta_{\lm}(S)}{2} , \forall S \in 2^N}_{\mathscr{C}^1_{\lm}({\bf v})}
	\Bigg\}
\end{equation}
Finding the values of $ v(i,j) $ in eq. (\ref{eq:NashStableSetAdditivelySeparableAndSymmetric}) satisfying $\mathscr{C}_{\lm}^1({\bf v})$ conditions can be done straightforward. However, we propose to formulate as an optimization problem for finding the values of $ v(i,j)$. A feasible solution of the following linear program guarantees the non-emptiness of $\NashStableSet^{\text{A}}(\lm)$:
\begin{align}\label{eq:OptimalAdditivelySeparableAndSymmetric}
    &\max_{{\bf v}} \sum_{(i,j)\in \mathcal{V}(N)} v(i,j) \mbox{ subject to } \notag \\
    &\sum_{(i,j)\in \mathcal{V}(S)} v(i,j) \leq \tfrac{\Delta_{\lm}(S)}{2}, \quad \forall S \in 2^N, 
\end{align}
where note that any feasible solution ${\bf v}^*$ is upper bounded by $\sum_{(i,j)\in \mathcal{V}(N)} v^*(i,j) \leq {\Delta_{\lm}(N)}/{2}$. Furthermore, the coalition partition that stems from ${\bf v}^*$ is given by $\Pi^{\text{NS}} \in \mathsf{M}({\bf v}^*)$ which is Nash-stable.


\section{Decentralized Clustering}
In this section, we study finding a Nash-stable partition in a decentralized setting which corresponds to finding stable decentralized clustering. We, in fact, model the problem of finding a Nash-stable partition as a non-cooperative game. 

A hedonic coalition formation game is equivalent to a non-cooperative game. Denote as $\Sigma$ the set of \textit{strategies}. We assume that the number of strategies is equal to the number of players, i.e. $ |\Sigma|=n $. This is sufficient to represent all possible choices. Indeed, the players that select the same strategy are interpreted as a coalition. For example, if every player chooses different strategies, then this corresponds to the coalition partition comprised of singletons.

Consider the \textit{best-reply dynamics} where in a particular step, only one player chooses its best strategy. A \textit{strategy tuple} is represented as $\bm{\sigma} = \{\sigma_1, \sigma_2, \ldots, \sigma_n\}$, where $\sigma_i\in \Sigma$ is the strategy of player $i$. In every step, only one dimension is changed in $\bm{\sigma}$. We further define 
\begin{align}
    & S_{\bm{\sigma}}(i) = \{ j\in N: \sigma_i = \sigma_j \} \\
    & \Pi(\bm{\sigma}) = \{ S_{\bm{\sigma}}(i), \forall i \in N \}
\end{align}
the set of players that share the same strategy with player $i$ and partition of players with respect to strategy tuple $\bm{\sigma}$. Thus, note that $\cup_{i\in N} S_{\bm{\sigma}}(i) = N$ for each step. The gain of player $i$ in case of strategy tuple $\bm{\sigma}$ is represented by $\phi_i(\bm{\sigma})$ which verifies the following relation:
\begin{equation}
	\phi_i(\bm{\sigma}) \geq \phi_i(\bm{\sigma}') \Leftrightarrow S_{\bm{\sigma}}(i) \succeq_i S_{\bm{\sigma}'}(i), 
\end{equation}
Any sequence of strategy tuple in which each strategy tuple differs from the preceding one in only one coordinate is called a \textit{path}, and a unique deviator in each step strictly increases the gain he receives is an \textit{improvement path}. Obviously, any \textit{maximal improvement path} which is an improvement path that can not be extended is terminated by stability.

\subsection{Equilibrium Analysis}
The Nash equilibrium is defined as following:
\begin{equation}
	\sigma_i^{\text{NE}} \in \arg \max_{\sigma_i\in \Sigma} \phi_i(\sigma_i,\sigma_{-i}), \quad \forall i\in N.
\end{equation}
essentially corresponding to a Nash-stable partition in the original hedonic game, which is given by
\begin{align}
	& S_{\bm{\sigma}^{\text{NE}}}(i) = \{j\in N: \sigma^{\text{NE}}_i = \sigma_j\}, \quad \forall i \in N \notag \\ 
	& \Pi^{\text{NE}} = \{S_{\bm{\sigma}^{\text{NE}}}(i), \forall i\in N\}.
\end{align}

In the sequel, we prove that the additively separable and symmetric gains result in a \textit{potential game} where all the players have incentive to change their strategy according to a single global function called as \textit{potential function}.
\begin{oframed}
\begin{theorem}\label{thm:AdditivelySeparableAndSymmetricUtilitiesResultsInAPotentialGame}
	Any additively separable and symmetric gain results in a potential game with potential function:
	\begin{equation}\label{eq:additivelyseparableandsymmetricpotential}
		P_{{\bf v}}(\bm{\sigma}) = \sum_{S\in \Pi(\bm{\sigma})} \sum_{(i,j)\in \mathcal{V}(S)} v(i,j).
	\end{equation}
\end{theorem}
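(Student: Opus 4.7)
The plan is to verify directly that $P_{\bf v}$ is an exact potential, i.e., that for any unilateral deviation of a single player $i$ from strategy tuple $\bm{\sigma}$ to $\bm{\sigma}' = (\sigma_i', \sigma_{-i})$, the change in $i$'s gain equals the change in $P_{\bf v}$. Since the game is equivalent to the hedonic game under the correspondence $\sigma \mapsto \Pi(\bm{\sigma})$, and since preferences are additively separable and symmetric, player $i$'s gain in coalition $S_{\bm{\sigma}}(i)$ can be written, using $v(i,i)=0$, as $\phi_i(\bm{\sigma}) = \sum_{j \in S_{\bm{\sigma}}(i)} v(i,j)$.

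The first step is to identify precisely which terms of $P_{\bf v}$ are affected by the deviation. When $i$ switches from $\sigma_i$ to $\sigma_i'$, only two coalitions in the partition change: the old coalition $S := S_{\bm{\sigma}}(i)$ loses player $i$ to become $S \setminus \{i\}$, and some existing coalition $T \in \Pi(\bm{\sigma}) \cup \{\emptyset\}$ gains $i$ to become $T \cup \{i\}$ (the empty case corresponds to creating a new singleton). All other coalitions are untouched, so their contributions to $P_{\bf v}$ cancel in the difference $P_{\bf v}(\bm{\sigma}') - P_{\bf v}(\bm{\sigma})$.

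The second step, which is the only real computation, is to evaluate the effect on the two affected coalitions. For any set $R$ and $i \notin R$, the edges in $\mathcal{V}(R \cup \{i\})$ decompose as $\mathcal{V}(R)$ together with the edges $\{(i,j) : j \in R\}$ (up to the ordering convention on pairs). Therefore
\begin{equation}
\sum_{(j,k) \in \mathcal{V}(R \cup \{i\})} v(j,k) - \sum_{(j,k) \in \mathcal{V}(R)} v(j,k) = \sum_{j \in R} v(i,j),
\end{equation}
where symmetry $v(i,j) = v(j,i)$ is what lets us drop the ordering constraint on the incident edges. Applying this to $R = T$ with a plus sign and to $R = S \setminus \{i\}$ with a minus sign gives
\begin{equation}
P_{\bf v}(\bm{\sigma}') - P_{\bf v}(\bm{\sigma}) = \sum_{j \in T} v(i,j) - \sum_{j \in S \setminus \{i\}} v(i,j) = \phi_i(\bm{\sigma}') - \phi_i(\bm{\sigma}),
\end{equation}
which is the exact potential property.

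There is no real obstacle here; the argument is essentially bookkeeping. The only subtlety worth being careful about is the symmetry assumption: without $v(i,j) = v(j,i)$ the deviator's gain would involve $\sum_j v(i,j)$ whereas the potential would involve an asymmetric sum, and the two deltas would no longer match. Once symmetry is used to identify the per-edge contribution to both quantities, the theorem follows immediately, and any maximal improvement path in this finite game must terminate at a Nash equilibrium, which by construction corresponds to a Nash-stable partition.
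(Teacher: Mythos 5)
Your proposal is correct and follows essentially the same route as the paper's proof: identify the two coalitions affected by a unilateral deviation, note that all other coalitions' contributions cancel, and use the decomposition of $\mathcal{V}(R\cup\{i\})$ into $\mathcal{V}(R)$ plus the edges incident to $i$ (where symmetry is used) to show the change in the potential equals the change in the deviator's additively separable gain. The only difference is cosmetic sign convention in the difference quotient.
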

\begin{proof}
	A non-cooperative game is a potential game whenever there exists a function $P_{{\bf v}}$ such that:
	\begin{equation*}
	P_{{\bf v}}(\sigma_i,\sigma_{-i}) - P_{{\bf v}}(\sigma'_i,\sigma_{-i}) = \phi(\sigma_i,\sigma_{-i}) - \phi(\sigma'_i,\sigma_{-i})
	\end{equation*}
	where $(\sigma_i,\sigma_{-i}) = \bm{\sigma}$ and $\sigma_{-i}$ shows the strategies of the players other than $i$. This means that when player $i$ switches from strategy $\sigma_i$ to $\sigma'_i$ the difference of its gain can be given by the difference of a function $P$. We choose the following potential function:
	\begin{equation}\label{eq:additivelyseparableandsymmetricpotential2}
	P_{{\bf v}}(\bm{\sigma}) = \sum_{S\in \Pi(\sigma)} \sum_{(i,j)\in \mathcal{V}(S)} v(i,j)
	\end{equation}
Let us denote as $i\in S$ and $i\not\in S'$ the coalitions when player $i$ switches from strategy $\sigma_i$ to $\sigma'_i$, respectively. Potential function is given as following before and after switching
	\begin{gather*}
		P_{{\bf v}}(\sigma_i,\sigma_{-i}) =  \sum_{(i,j)\in \mathcal{V}(S)} v(i,j) + \sum_{(k,j)\in \mathcal{V}(S')} v(k,j) \\ + \sum_{T\in \Pi(\bm{\sigma})\setminus \{S,S'\}} \sum_{(k,j)\in \mathcal{V}(T)} v(k,j) 
	\end{gather*}
	\begin{gather*}
		P_{{\bf v}}(\sigma'_i,\sigma_{-i}) =  \sum_{(k,j)\in \mathcal{V}(S\setminus i)} v(k,j) + \sum_{(k,j)\in \mathcal{V}(S'\cup i)} v(k,j) \\ + \sum_{T\in \Pi(\sigma)\setminus \{S,S'\}} \sum_{(k,j)\in \mathcal{V}(T)} v(k,j)
	\end{gather*}
	where note that we have $S \rightarrow S\setminus i$ and $S' \rightarrow S'\cup i$ after switching. Thus, we have
	\begin{gather*}
	P_{{\bf v}}(\sigma_i,\sigma_{-i}) - 	P_{{\bf v}}(\sigma'_i,\sigma_{-i}) = \\  \sum_{(i,j)\in \mathcal{V}(S)} v(i,j) + \sum_{(k,j)\in \mathcal{V}(S')} v(k,j) \\ - \sum_{(k,j)\in \mathcal{V}(S\setminus i)} v(k,j) - \sum_{(k,j)\in \mathcal{V}(S'\cup i)} v(k,j) = \\
	\sum_{j\in S} v(i,j) -\sum_{j\in S'\cup i} v(i,j)
	\end{gather*}
	On the other hand, the gain shift before and after strategy switch is given by
	\begin{gather*}
	\phi(\sigma_i,\sigma_{-i}) - 	\phi(\sigma'_i,\sigma_{-i}) = 
	\sum_{j\in S} v(i,j) -\sum_{j\in S'\cup i} v(i,j)
	\end{gather*}
	which concludes the proof that $P_{{\bf v}}(\sigma_i,\sigma_{-i}) - P_{{\bf v}}(\sigma'_i,\sigma_{-i}) = \phi(\sigma_i,\sigma_{-i}) - \phi(\sigma'_i,\sigma_{-i})$. 	
\end{proof}
\end{oframed}
In a potential game, a Nash equilibrium shall result in an optimum in potential $P_{{\bf v}}$. Therefore, $\bm{\sigma}^{*} \in \arg\max_{\bm{\sigma}} P_{{\bf v}}(\bm{\sigma})$ corresponds to a coalition partition $\Pi(\sigma^*) \in \mathsf{M}({\bf v})$ which is Nash-stable.

\section{Conclusions}
We analyzed stable clustering problem in federated learning setting. Clusters are made up of the agents contributing to federated learning. We considered that every agent is better off when switching from one cluster to another one. We modeled the decisions of agents in the framework hedonic games which is a widely used cooperative game model for this type of problems. A fundamental question in hedonic games is to analyze the conditions how stable coalition partitions can occur. We studied the existence of stable coalition partitions by introducing the Nash-stable set, and analyzed the existence of  decentralized coalition partitions. 

As future work, it may be interesting to do stability analysis for different stability notions, to study other types of preference profiles ensuring Nash stability. On the other hand, it is essential to do experiments with real data and specific learning models as well as realistic gain and communication cost functions.



\bibliographystyle{ACM-Reference-Format} 
\bibliography{main}


\begin{thebibliography}{20}


\ifx \showCODEN    \undefined \def \showCODEN     #1{\unskip}     \fi
\ifx \showDOI      \undefined \def \showDOI       #1{#1}\fi
\ifx \showISBNx    \undefined \def \showISBNx     #1{\unskip}     \fi
\ifx \showISBNxiii \undefined \def \showISBNxiii  #1{\unskip}     \fi
\ifx \showISSN     \undefined \def \showISSN      #1{\unskip}     \fi
\ifx \showLCCN     \undefined \def \showLCCN      #1{\unskip}     \fi
\ifx \shownote     \undefined \def \shownote      #1{#1}          \fi
\ifx \showarticletitle \undefined \def \showarticletitle #1{#1}   \fi
\ifx \showURL      \undefined \def \showURL       {\relax}        \fi
\providecommand\bibfield[2]{#2}
\providecommand\bibinfo[2]{#2}
\providecommand\natexlab[1]{#1}
\providecommand\showeprint[2][]{arXiv:#2}

\bibitem[\protect\citeauthoryear{Aziz and Brandl}{Aziz and Brandl}{2012}]%
        {bib:Existenceofstabilityinhedoniccoalitionformationgames}
\bibfield{author}{\bibinfo{person}{H. Aziz} {and} \bibinfo{person}{F. Brandl}.}
  \bibinfo{year}{2012}\natexlab{}.
\newblock \showarticletitle{Existence of stability in hedonic coalition
  formation games}. In \bibinfo{booktitle}{\emph{Proceedings of the 11th
  International Conference on Autonomous Agents and Multiagent Systems (AAMAS
  2012)}}.
\newblock


\bibitem[\protect\citeauthoryear{Barber and Gerber}{Barber and Gerber}{2007}]%
        {bib:Barber}
\bibfield{author}{\bibinfo{person}{S. Barber} {and} \bibinfo{person}{A.
  Gerber}.} \bibinfo{year}{2007}\natexlab{}.
\newblock \showarticletitle{A note on the impossibility of a satisfactory
  concept of stability for coalition formation games}.
\newblock \bibinfo{journal}{\emph{Economics Letters}}  \bibinfo{volume}{95}
  (\bibinfo{year}{2007}), \bibinfo{pages}{85--90}.
\newblock


\bibitem[\protect\citeauthoryear{Bogomonlaia and Jackson}{Bogomonlaia and
  Jackson}{2002}]%
        {bogomonlaia}
\bibfield{author}{\bibinfo{person}{A. Bogomonlaia} {and} \bibinfo{person}{M.
  Jackson}.} \bibinfo{year}{2002}\natexlab{}.
\newblock \showarticletitle{The stability of hedonic coalition structures}.
\newblock \bibinfo{journal}{\emph{Games and Economic Behavior}}
  \bibinfo{volume}{38} (\bibinfo{date}{jan} \bibinfo{year}{2002}),
  \bibinfo{pages}{201--230}.
\newblock


\bibitem[\protect\citeauthoryear{by: Peter~Kairouz and McMahan}{by:
  Peter~Kairouz and McMahan}{2021}]%
        {bib:OpenProblems}
\bibfield{author}{\bibinfo{person}{Edited by: Peter~Kairouz} {and}
  \bibinfo{person}{H.~Brendan McMahan}.} \bibinfo{year}{2021}\natexlab{}.
\newblock \showarticletitle{Advances and Open Problems in Federated Learning}.
\newblock \bibinfo{journal}{\emph{Foundations and Trends in Machine Learning}}
  \bibinfo{volume}{14}, \bibinfo{number}{1} (\bibinfo{year}{2021}),
  \bibinfo{pages}{--}.
\newblock
\showISSN{1935-8237}
\urldef\tempurl%
\url{https://doi.org/10.1561/2200000083}
\showDOI{\tempurl}


\bibitem[\protect\citeauthoryear{Ding, Fang, and Huang}{Ding
  et~al\mbox{.}}{2020}]%
  {bib:IncentiveMechanismDesignforFederatedLearningwithMulti-DimensionalPrivateInformation}
\bibfield{author}{\bibinfo{person}{Ningning Ding}, \bibinfo{person}{Zhixuan
  Fang}, {and} \bibinfo{person}{Jianwei Huang}.}
  \bibinfo{year}{2020}\natexlab{}.
\newblock \showarticletitle{Incentive Mechanism Design for Federated Learning
  with Multi-Dimensional Private Information}. In
  \bibinfo{booktitle}{\emph{2020 18th International Symposium on Modeling and
  Optimization in Mobile, Ad Hoc, and Wireless Networks (WiOPT)}}.
  \bibinfo{pages}{1--8}.
\newblock


\bibitem[\protect\citeauthoryear{{Donahue} and J.~{Kleinberg}}{{Donahue} and
  J.~{Kleinberg}}{2021}]%
        {bib:Donahue}
\bibfield{author}{\bibinfo{person}{K. {Donahue}} {and}
  \bibinfo{person}{booktitle={AAAI 2021} J.~{Kleinberg}}.}
  \bibinfo{year}{2021}\natexlab{}.
\newblock \showarticletitle{Model-sharing Games: Analyzing Federated Learning
  Under Voluntary Participation.}
\newblock


\bibitem[\protect\citeauthoryear{Elgabli, Park, Bedi, Bennis, and
  Aggarwal}{Elgabli et~al\mbox{.}}{2020}]%
        {bib:MehdiBennis}
\bibfield{author}{\bibinfo{person}{Anis Elgabli}, \bibinfo{person}{Jihong
  Park}, \bibinfo{person}{Amrit~S Bedi}, \bibinfo{person}{Mehdi Bennis}, {and}
  \bibinfo{person}{Vaneet Aggarwal}.} \bibinfo{year}{2020}\natexlab{}.
\newblock \showarticletitle{GADMM: fast and communication efficient framework
  for distributed machine learning}.
\newblock \bibinfo{journal}{\emph{Journal of Machine Learning Research}}
  \bibinfo{volume}{21}, \bibinfo{number}{76} (\bibinfo{year}{2020}),
  \bibinfo{pages}{1--39}.
\newblock


\bibitem[\protect\citeauthoryear{Hajdukov\'{a}}{Hajdukov\'{a}}{2006}]%
        {bib:hajdukova}
\bibfield{author}{\bibinfo{person}{J. Hajdukov\'{a}}.}
  \bibinfo{year}{2006}\natexlab{}.
\newblock \showarticletitle{Coalition formation games: A survey}.
\newblock \bibinfo{journal}{\emph{International Game Theory Review}}
  \bibinfo{volume}{8}, \bibinfo{number}{4} (\bibinfo{year}{2006}),
  \bibinfo{pages}{613--641}.
\newblock


\bibitem[\protect\citeauthoryear{{Hosseinalipour}, {Brinton}, {Aggarwal},
  {Dai}, and {Chiang}}{{Hosseinalipour} et~al\mbox{.}}{2020}]%
        {bib:MungChiang}
\bibfield{author}{\bibinfo{person}{S. {Hosseinalipour}}, \bibinfo{person}{C.~G.
  {Brinton}}, \bibinfo{person}{V. {Aggarwal}}, \bibinfo{person}{H. {Dai}},
  {and} \bibinfo{person}{M. {Chiang}}.} \bibinfo{year}{2020}\natexlab{}.
\newblock \showarticletitle{From Federated to Fog Learning: Distributed Machine
  Learning over Heterogeneous Wireless Networks}.
\newblock \bibinfo{journal}{\emph{IEEE Communications Magazine}}
  \bibinfo{volume}{58}, \bibinfo{number}{12} (\bibinfo{year}{2020}),
  \bibinfo{pages}{41--47}.
\newblock
\urldef\tempurl%
\url{https://doi.org/10.1109/MCOM.001.2000410}
\showDOI{\tempurl}


\bibitem[\protect\citeauthoryear{{Khan}, {Pandey}, {Tran}, {Saad}, {Han},
  {Nguyen}, and {Hong}}{{Khan} et~al\mbox{.}}{2020}]%
        {bib:WalidSaad2}
\bibfield{author}{\bibinfo{person}{L.~U. {Khan}}, \bibinfo{person}{S.~R.
  {Pandey}}, \bibinfo{person}{N.~H. {Tran}}, \bibinfo{person}{W. {Saad}},
  \bibinfo{person}{Z. {Han}}, \bibinfo{person}{M.~N.~H. {Nguyen}}, {and}
  \bibinfo{person}{C.~S. {Hong}}.} \bibinfo{year}{2020}\natexlab{}.
\newblock \showarticletitle{Federated Learning for Edge Networks: Resource
  Optimization and Incentive Mechanism}.
\newblock \bibinfo{journal}{\emph{IEEE Communications Magazine}}
  \bibinfo{volume}{58}, \bibinfo{number}{10} (\bibinfo{year}{2020}),
  \bibinfo{pages}{88--93}.
\newblock
\urldef\tempurl%
\url{https://doi.org/10.1109/MCOM.001.1900649}
\showDOI{\tempurl}


\bibitem[\protect\citeauthoryear{Kim}{Kim}{2020}]%
        {bib:IncentiveDesignandDifferentialPrivacyBasedFederatedLearning}
\bibfield{author}{\bibinfo{person}{Sungwook Kim}.}
  \bibinfo{year}{2020}\natexlab{}.
\newblock \showarticletitle{Incentive Design and Differential Privacy Based
  Federated Learning: A Mechanism Design Perspective}.
\newblock \bibinfo{journal}{\emph{IEEE Access}}  \bibinfo{volume}{8}
  (\bibinfo{year}{2020}), \bibinfo{pages}{187317--187325}.
\newblock
\urldef\tempurl%
\url{https://doi.org/10.1109/ACCESS.2020.3030888}
\showDOI{\tempurl}


\bibitem[\protect\citeauthoryear{Le, Tran, Tun, Nguyen, Pandey, Han, and
  Hong}{Le et~al\mbox{.}}{2021}]%
        {bib:IncentiveMechanismforFederatedLearninginWirelessCellularnetwork}
\bibfield{author}{\bibinfo{person}{Tra Huong~Thi Le},
  \bibinfo{person}{Nguyen~H. Tran}, \bibinfo{person}{Yan~Kyaw Tun},
  \bibinfo{person}{Minh N.~H. Nguyen}, \bibinfo{person}{Shashi~Raj Pandey},
  \bibinfo{person}{Zhu Han}, {and} \bibinfo{person}{Choong~Seon Hong}.}
  \bibinfo{year}{2021}\natexlab{}.
\newblock \showarticletitle{An Incentive Mechanism for Federated Learning in
  Wireless Cellular network: An Auction Approach}.
\newblock \bibinfo{journal}{\emph{IEEE Transactions on Wireless
  Communications}} (\bibinfo{year}{2021}), \bibinfo{pages}{1--1}.
\newblock
\urldef\tempurl%
\url{https://doi.org/10.1109/TWC.2021.3062708}
\showDOI{\tempurl}


\bibitem[\protect\citeauthoryear{{Li}, {Sahu}, {Talwalkar}, and {Smith}}{{Li}
  et~al\mbox{.}}{2020}]%
        {bib:Li}
\bibfield{author}{\bibinfo{person}{T. {Li}}, \bibinfo{person}{A.~K. {Sahu}},
  \bibinfo{person}{A. {Talwalkar}}, {and} \bibinfo{person}{V. {Smith}}.}
  \bibinfo{year}{2020}\natexlab{}.
\newblock \showarticletitle{Federated Learning: Challenges, Methods, and Future
  Directions}.
\newblock \bibinfo{journal}{\emph{IEEE Signal Processing Magazine}}
  \bibinfo{volume}{37}, \bibinfo{number}{3} (\bibinfo{year}{2020}),
  \bibinfo{pages}{50--60}.
\newblock
\urldef\tempurl%
\url{https://doi.org/10.1109/MSP.2020.2975749}
\showDOI{\tempurl}


\bibitem[\protect\citeauthoryear{Lim, Xiong, Miao, Niyato, Yang, Leung, and
  Poor}{Lim et~al\mbox{.}}{2020}]%
        {bib:HierarchicalIncentiveMechanismDesign}
\bibfield{author}{\bibinfo{person}{Wei Yang~Bryan Lim}, \bibinfo{person}{Zehui
  Xiong}, \bibinfo{person}{Chunyan Miao}, \bibinfo{person}{Dusit Niyato},
  \bibinfo{person}{Qiang Yang}, \bibinfo{person}{Cyril Leung}, {and}
  \bibinfo{person}{H.~Vincent Poor}.} \bibinfo{year}{2020}\natexlab{}.
\newblock \showarticletitle{Hierarchical Incentive Mechanism Design for
  Federated Machine Learning in Mobile Networks}.
\newblock \bibinfo{journal}{\emph{IEEE Internet of Things Journal}}
  \bibinfo{volume}{7}, \bibinfo{number}{10} (\bibinfo{year}{2020}),
  \bibinfo{pages}{9575--9588}.
\newblock
\urldef\tempurl%
\url{https://doi.org/10.1109/JIOT.2020.2985694}
\showDOI{\tempurl}


\bibitem[\protect\citeauthoryear{{Samarakoon}, {Bennis}, {Saad}, and
  {Debbah}}{{Samarakoon} et~al\mbox{.}}{2018}]%
        {bib:WalidSaad1}
\bibfield{author}{\bibinfo{person}{S. {Samarakoon}}, \bibinfo{person}{M.
  {Bennis}}, \bibinfo{person}{W. {Saad}}, {and} \bibinfo{person}{M. {Debbah}}.}
  \bibinfo{year}{2018}\natexlab{}.
\newblock \showarticletitle{Federated Learning for Ultra-Reliable Low-Latency
  V2V Communications}. In \bibinfo{booktitle}{\emph{2018 IEEE Global
  Communications Conference (GLOBECOM)}}. \bibinfo{pages}{1--7}.
\newblock
\urldef\tempurl%
\url{https://doi.org/10.1109/GLOCOM.2018.8647927}
\showDOI{\tempurl}


\bibitem[\protect\citeauthoryear{Sarikaya and Ercetin}{Sarikaya and
  Ercetin}{2020}]%
        {bib:MotivatingWorkersinFederatedLearning:AStackelbergGamePerspective}
\bibfield{author}{\bibinfo{person}{Yunus Sarikaya} {and} \bibinfo{person}{Ozgur
  Ercetin}.} \bibinfo{year}{2020}\natexlab{}.
\newblock \showarticletitle{Motivating Workers in Federated Learning: A
  Stackelberg Game Perspective}.
\newblock \bibinfo{journal}{\emph{IEEE Networking Letters}}
  \bibinfo{volume}{2}, \bibinfo{number}{1} (\bibinfo{year}{2020}),
  \bibinfo{pages}{23--27}.
\newblock
\urldef\tempurl%
\url{https://doi.org/10.1109/LNET.2019.2947144}
\showDOI{\tempurl}


\bibitem[\protect\citeauthoryear{{Yang}, {Jiang}, {Shi}, and {Ding}}{{Yang}
  et~al\mbox{.}}{2020}]%
        {bib:YuanmingShi}
\bibfield{author}{\bibinfo{person}{K. {Yang}}, \bibinfo{person}{T. {Jiang}},
  \bibinfo{person}{Y. {Shi}}, {and} \bibinfo{person}{Z. {Ding}}.}
  \bibinfo{year}{2020}\natexlab{}.
\newblock \showarticletitle{Federated Learning via Over-the-Air Computation}.
\newblock \bibinfo{journal}{\emph{IEEE Transactions on Wireless
  Communications}} \bibinfo{volume}{19}, \bibinfo{number}{3}
  (\bibinfo{year}{2020}), \bibinfo{pages}{2022--2035}.
\newblock
\urldef\tempurl%
\url{https://doi.org/10.1109/TWC.2019.2961673}
\showDOI{\tempurl}


\bibitem[\protect\citeauthoryear{Zhan and Zhang}{Zhan and Zhang}{2020}]%
  {bib:AnIncentiveMechanismDesignforEfficientEdgeLearningbyDeepReinforcementLearningApproach}
\bibfield{author}{\bibinfo{person}{Yufeng Zhan} {and} \bibinfo{person}{Jiang
  Zhang}.} \bibinfo{year}{2020}\natexlab{}.
\newblock \showarticletitle{An Incentive Mechanism Design for Efficient Edge
  Learning by Deep Reinforcement Learning Approach}. In
  \bibinfo{booktitle}{\emph{IEEE INFOCOM 2020 - IEEE Conference on Computer
  Communications}}. \bibinfo{pages}{2489--2498}.
\newblock
\urldef\tempurl%
\url{https://doi.org/10.1109/INFOCOM41043.2020.9155268}
\showDOI{\tempurl}


\bibitem[\protect\citeauthoryear{Zhan, Zhang, Hong, Wu, Li, and Guo}{Zhan
  et~al\mbox{.}}{5555}]%
        {bib:ASurveyofIncentiveMechanismDesignforFederatedLearning}
\bibfield{author}{\bibinfo{person}{Y. Zhan}, \bibinfo{person}{J. Zhang},
  \bibinfo{person}{Z. Hong}, \bibinfo{person}{L. Wu}, \bibinfo{person}{P. Li},
  {and} \bibinfo{person}{S. Guo}.} \bibinfo{year}{5555}\natexlab{}.
\newblock \showarticletitle{A Survey of Incentive Mechanism Design for
  Federated Learning}.
\newblock \bibinfo{journal}{\emph{IEEE Transactions on Emerging Topics in
  Computing}} \bibinfo{number}{1} (\bibinfo{date}{mar} \bibinfo{year}{5555}),
  \bibinfo{pages}{1--1}.
\newblock
\showISSN{2168-6750}
\urldef\tempurl%
\url{https://doi.org/10.1109/TETC.2021.3063517}
\showDOI{\tempurl}


\bibitem[\protect\citeauthoryear{{Zheng}, {Shen}, and {Chen}}{{Zheng}
  et~al\mbox{.}}{2020}]%
        {bib:CongShen}
\bibfield{author}{\bibinfo{person}{S. {Zheng}}, \bibinfo{person}{C. {Shen}},
  {and} \bibinfo{person}{X. {Chen}}.} \bibinfo{year}{2020}\natexlab{}.
\newblock \showarticletitle{Design and Analysis of Uplink and Downlink
  Communications for Federated Learning}.
\newblock \bibinfo{journal}{\emph{IEEE Journal on Selected Areas in
  Communications}} (\bibinfo{year}{2020}), \bibinfo{pages}{1--1}.
\newblock
\urldef\tempurl%
\url{https://doi.org/10.1109/JSAC.2020.3041388}
\showDOI{\tempurl}


\end{thebibliography}


\end{document}